\definecolor{light-gray}{gray}{0.8}
\def\nb0{{\mathbf{0}}}
\def\nb1{{\mathbf{1}}}
\def\ncalF{{\mathcal{F}}}
\def\ncalP{{\mathcal{P}}}
\newtheorem{theorem}{Theorem}
\newtheorem{cor}{Corollary}
\def\E{\mathbb{E}}
\def\P{\mathbb{P}}
\def\R{\mathbb{R}}
\def\sir{\mathtt{SIR}}
\def\b {b}
\newcommand{\ud}{\, \mathrm{d}}
\newtheorem{corollary}[cor]{Corollary}
\begin{document}
\title{Effect of Retransmissions on Optimal Caching in Cache-enabled Small Cell Networks}
\author{\IEEEauthorblockN{Shankar Krishnan, Mehrnaz Afshang, and Harpreet S. Dhillon}
\thanks{The authors are with Wireless@VT, Department of ECE, Virginia Tech, Blacksburg, VA, USA. Email: \{kshank93, mehrnaz, hdhillon\}@vt.edu.}}
\maketitle
\begin{abstract}
Caching {\em popular content} in the storage of small cells is being considered as an efficient technique to complement limited backhaul of small cells in ultra-dense heterogeneous cellular networks. Limited storage capacity of the small cells renders it important to determine the optimal set of files (cache) to be placed on each small cell. In contrast to prior works on optimal caching, in this work we study the effect of {\em retransmissions} on the optimal cache placement policy for both static and mobile user scenarios. With the popularity of files modeled as a Zipf distribution and a maximum $n$ transmissions, i.e., $n-1$ retransmissions, allowed to receive each file, we determine the optimal caching probability of the files that maximizes the \emph{hit probability}. Our closed-form optimal solutions concretely demonstrate that the optimal caching probabilities are very sensitive to the number of retransmissions. 

\end{abstract}
\IEEEpeerreviewmaketitle
\begin{IEEEkeywords}
Stochastic geometry, small cell caching, retransmissions, mobility, hit probability, optimal cache.
\end{IEEEkeywords}
\allowdisplaybreaks
\section{Introduction}
Aggressive reuse of spectrum through dense deployment of {\em small cell base stations (SCBSs)} and caching popular content in their storage can help address growing capacity demands and reduce backhaul loads \cite{SCCaching}. The basic idea behind small cell caching is to download popular content (mainly video) automatically in the cache of SCBSs at off-peak hours, which can then be delivered to the users during peak hours. However, the key challenge in designing such a cache-enabled network is to determine the content that should be cached at each SCBS. Among the entire content available in the internet, only a small fraction of the total content, termed {\em popular content}, is accessed by a large fraction of users \cite{cha2007tube}, thus allowing us to focus primarily on the {\em library} of popular files to cache in SCBSs. However due to limited capacities of cache storage units, one can only cache a subset of the library on each SCBS. This necessitates the need to look into optimal caching of the popular content for maximum utilization of these caches and improve the overall network performance.

\subsubsection*{Prior work} Optimal caching of popular content in the storage of cache-enabled wireless networks has been studied for quite some time and can be broadly classified into two categories. The first line of work considers a scenario where each user has access (is in coverage) to at most one cache and the optimal cache placement results in storing the most popular files in the storage of all caches. To give an example, the authors in \cite{Rao} studied optimal cache placement in a Device-to-Device (D2D) assisted wireless caching network and showed that the most popular contents need to be cached more often in the network to maximize the offloading probabilities in this two-tier wireless caching system. The second line of work considers a multi-coverage scenario (user is covered by multiple caches) and studies optimal cache placement while exploiting the multiple covered caches as a larger distributed cache. Using tools from stochastic geometry, \cite{Bartec} shows that it is not always optimal to cache the most popular content everywhere in a multi-coverage scenario and proposes a probabilistic placement policy to maximize the user's \emph{hit probability}. Recently, the authors in \cite{OptCachingHetNet}  also studied optimal caching in heterogeneous cellular networks with different cache capabilities in each tier of the network and showed that an optimal content placement is significantly better than storing the most popular content everywhere. 

Despite increasing interest in the analysis of cache-enabled networks, all the prior works consider a single transmission scenario, which is not quite realistic because of the provision of retransmissions in actual systems. For wireless ad hoc networks, it is shown in \cite{nardelli2012optimal} and \cite{kaynia2011performance} that an increase in the number of allowed retransmissions decreases the link outage probability and thereby improves reliability. In this work, we extend the above argument to cache-enabled small cell networks by retransmitting the packets received in error to increase the chances of a cache hit. In particular, we determine the optimal caching strategies that maximize \emph{hit probability}. It should be noted that retransmissions may impact other metrics, e.g., it may increase energy consumption, increase latency, and lower spectral efficiency. However, the investigation of the effect of retransmissions on these metrics is not in the scope of this paper. More details are provided next.

\subsubsection*{Contributions} In this work, we determine the optimal caching policies that maximize hit probability for a typical user (static or mobile), which tries to receive its file of interest from the cache of SCBSs within a predefined number of maximum transmissions $n$. Here hit probability is the probability that the user is able to successfully receive the file within $n$ transmissions. As expected, it is seen that the hit probability increases with the number of retransmissions. In contrast to prior works on optimal caching, we also determine the optimal caching policies in static and mobile scenarios, and show that the optimal solutions in the two cases are significantly different. While the static case tends to cache the most popular files from the library on each SCBS, mobility de-emphasizes the importance of popularity of files and allows the SCBSs to cache content in a more {\em balanced} way. 

\section{System Model} \label{sec2}
We consider an ultra-dense cache-enabled SCBS network as illustrated in Fig. \ref{Fig: system model} and model the locations of the SCBSs by a homogeneous Poisson Point Processes (PPP) $\Phi$ with density $\lambda$\cite{dhillon2012modeling}. We assume each SCBS can cache a maximum of $L$ files, and the total number of files in the library is denoted by $K$. We denote by ${\tt P}_{{\rm R}_i}$ the probability that the $i^{th}$ file, ${\cal F}_i$, will be requested. We order the files based on their popularity, i.e., $i=1$ and $i=K$ correspond to the most popular and least popular files, respectively. The popularity of the files in the library is assumed to follow Zipf's law~\cite{cha2007tube}, i.e. ${\tt P}_{{\rm R}_i} = {i^{-\gamma}}/{\sum_{j=1}^K j^{-\gamma}}$, where $\gamma > 0$ is the Zipf parameter which governs the skewness of the popularity distribution.  We also assume that each SCBS caches file ${\cal F}_i$ with probability $b_i$ independently of the other SCBSs. Therefore, $\sum_{i=1}^K b_i = L$.
\subsubsection*{Retransmissions}
We consider a typical user that attempts to download its file of interest from the cache of SCBSs in an ultra dense network as shown in Fig. \ref{Fig: system model}. It is quite likely that a single attempt to download the file is unsuccessful, either because none of the SCBSs in the user's vicinity has cached the file (each SCBS caches only a fraction of the library) or due to poor channel conditions. In this work, we hence study the effect of retransmissions on the network performance for both static (same user location during all retransmission attempts) and mobile user scenarios. For the analysis of mobile users, we consider the popular {\em infinite mobility model} \cite{haenggi2013local, backhaul, UserMobility} in which a user is assumed to experience an independent realization of the point process for each transmission. This model is quite relevant for ultra dense networks where even a small displacement of a user may take it to a completely new {\em local neighborhood} of SCBSs.

For the setup studied in this paper, we assume that a typical user (can be mobile or static) tries to receive its file of interest ${\cal F}_i$ from the cache of SCBSs for a maximum of $n$ transmissions (or $n-1$ retranmissions). The signal to interference ratio ($\sir$) received at the typical user in the $k^{\rm th}$ transmission can be expressed as $\sir_{i,k} =  \frac{h_{xk} \|x\|^{-\alpha}}{\sum\limits_{y \in \Phi \backslash \{x\}} h_{yk} \|y\|^{-\alpha} },$
where $\{h_{xk} , h_{yk}\} \sim \exp(1)$ denote the Rayleigh fading channel gains from the serving device $x \in \Phi$ and interferer $\{y\}$ in the $k^{\rm th}$ transmission, and $\|\cdot \|^{-\alpha}$ is standard power-law pathloss with exponent $\alpha>2$. Here, we assume that the fading gains are independent across transmission attempts.
\subsubsection*{Cell Selection Policy}
To choose a serving SCBS for the typical user,  a straightforward choice would be to connect to the SCBS that maximizes its average received power at the receiver, agnostic of the cached file in that SCBS. This corresponds to the closest SCBS to the typical user. We term this policy as policy 1 ({\em cache-agnostic policy}) and denote it by ${\cal P} 1$. While this  policy is meaningful in general, it suffers from a disadvantage in cache-enabled networks: the closest SCBS is not guaranteed to have the file requested by the user. To address this, we also study policy 2 ({\em cache-aware policy}), denoted by ${\cal P} 2$, where the user connects to the closest SCBS that has the file it needs. This requires the knowledge of the cache of nearby SCBSs, which can be obtained by a centralized mechanism where macrocells can assist the users to connect to the SCBS which contains its file of interest. However since this SCBS may not always be in the vicinity of the user, the file transfer may not necessarily succeed due to poor channel conditions. Therefore, there are tradeoffs involved in choosing the cell selection policies.
\begin{figure}[!t]
\centering
  \includegraphics[width=1.0 \linewidth]{./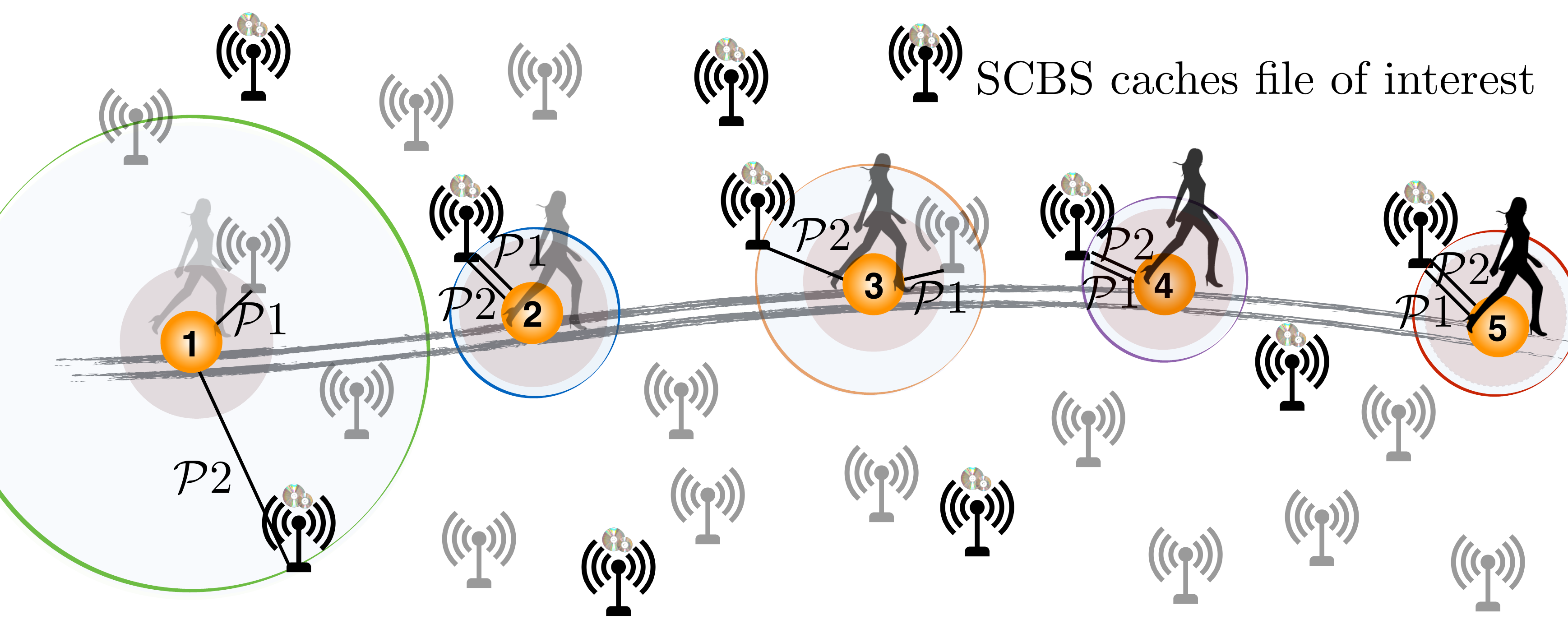}
  \caption{User tries to obtain the file of interest from the network using Policies 1 and 2 (${\cal P}1$ and ${\cal P}2$) while moving from location 1 through 5. Under ${\cal P}1$, user connects to the SCBS providing highest average received power (closest). Under ${\cal P}2$, user connects to the closest SCBS that has its file of interest. }
  \label{Fig: system model}
\end{figure}

For this system model, we introduce the optimal caching problem that maximizes the \emph{hit probability} under the two cell selection policies ${\cal P} 1$ and ${\cal P} 2$ in the next section.  

\section{Hit Probability} \label{Sec:sec3}

In a given transmission, a file is successfully received only when the user is in \emph{coverage} of a SCBS which has the file of interest in its cache. For a given modulation and coding threshold $T$, the coverage probability of file ${\cal F}_i$ in the  $k^{\rm th}$ transmission is given by $\P(\sir_{i,k} > T)$. Let $S_i$ be the event that file ${\cal F}_i$ is successfully received within $n$ transmissions with the \emph{success probability} denoted by $p_{{\tt s}_i}^{1|n} = \P(S_i)$. We assume i.i.d. fading over the $n$ transmissions and hence the success probability of file ${\cal F}_i$ in each transmission is the same and denoted by $p_{{\tt s}_i}$. 
 If all the $n$ transmissions in receiving the file are unsuccessful, the user is said to be in outage from file ${\cal F}_i$. Let the outage probability of file ${\cal F}_i$ in $n$ transmissions be given by $p_{{\tt o}_i}^{1|n} = 1 - p_{{\tt s}_i}^{1|n}$. The performance metric of interest is the {\em hit probability (HP)}, which we define mathematically as the sum of the probabilities of successfully receiving each file in the library within $n$ transmissions, weighted by their corresponding request probabilities and is expressed below.
\begin{align} \label{Hitprob}
{\tt P}_{\tt hit} = \sum_{i = 1}^K  {\tt P}_{{\rm R}_i} p_{{\tt s}_i}^{1|n} = \sum_{i = 1}^K  {\tt P}_{{\rm R}_i}(1-p_{{\tt o}_i}^{1|n}).
\end{align}
Our goal is to maximize the hit probability which will be defined for two scenarios - i) mobile user and ii) static user in the next two subsections. The optimal caching probabilities $\{b_i\}$ that maximize the HP for the two cell selection policies ${\cal P} 1$ and ${\cal P} 2$ will also be discussed for both scenarios. 
\subsection{Mobile user}
Under infinite mobility assumption, the probability of file success (or outage) in each transmission is independent of the previous transmissions. Hence, the outage (or success) probability of file ${\cal F}_i$ in $n$ transmissions is simply given as the product of outage (or success) probabilities in each transmission i.e. $p_{{\tt o}_i}^{1|n} = (1- {p}_{{\rm s}_i})^n$. The optimization problem to maximize the hit probability can be hence formulated as:
\begin{align} \label{Eq:OptFunctionHMU}
&\max_{\{b_i\}} \: \: \sum_{i = 1}^K  {\tt P}_{{\rm R}_i}(1-(1-{p}_{{\rm s}_i})^{n}), \\\notag & {\rm s.t.} \: \: \sum_{i = 1}^K b_i = L \: \text{and} \:  \: 0 \leq b_i \leq 1, \: i = 1, \dots K.
\end{align}
\subsubsection{Policy 1}

As described in section \ref{sec2}, under policy 1 user connects to the closest SCBS as that maximizes its average received signal strength. A successful reception of the file hence depends on the probability that it is available in the cache of the closest SCBS and that the user is in {\em coverage} of the closest SCBS.  Mathematically, the probability of successfully receiving file ${\cal F}_i$ in a given transmission is thus the product of its caching probability $b_i$ and coverage probability under policy 1 denoted by ${p}_{{\rm c}_{i}}^{(1)}$ i.e. ${p}_{{\rm s}_i} = b_i {p}_{{\rm c}_{i}}^{(1)}$.
It is worth noting that this coverage probability ${p}_{{\rm c}_{i}}^{(1)}$(when user connects to the closest SCBS) is  independent of the density of SCBSs under interference limited regime and has been derived in~\cite{AndBacJ2011} for a similar cellular downlink problem. The probability of successfully receiving file ${\cal F}_i$ in a given transmission under policy 1 is therefore,
\begin{align} 
{p}_{{\rm s}_i} = b_i {p}_{{\rm c}_{i}}^{(1)} &= \frac{b_i}{1 + \rho_1(T,\alpha)}, \label{Eq:SuccProbP1} \\
\text{where} \: \rho_1(T,\alpha) &= T^{2/\alpha} \int_{T^{-2/\alpha}}^{\infty} \frac{\ud u }{1+u^{\alpha /2}}. \label{Eq:rho1}
\end{align}
It can be shown that the optimization problem (\ref{Eq:OptFunctionHMU}) is concave and hence the Karush-Kuhn-Tucker (KKT) conditions provide necessary and sufficient conditions for optimality. The Lagrangian function corresponding to Problem (\ref{Eq:OptFunctionHMU}) is 
\begin{align}
& \notag {\cal L}(\mathbf{b},\nu,\boldsymbol{\mu},\mathbf{w}) = \sum_{i = 1}^K {\tt P}_{{\rm R}_i}(1-(1-b_i{p}_{{\rm c}_{i}}^{(1)})^{n}) + \nu (\sum_{i = 1}^K b_i - L) \\\notag & \qquad \qquad - \sum_{i=1}^K \mu_i b_i  + \sum_{i=1}^{K} w_i(b_i -1),\\\notag
&\text{where} \: \boldsymbol{\mu},\mathbf{w} \in \R_+^K \: \text{and} \: \nu \in \R. 
\end{align}
Let $\mathbf{b}^*, \nu^*, \boldsymbol{\mu}^*$ and $\mathbf{w}^*$ be primal and dual optimal. The KKT conditions for Problem (\ref{Eq:OptFunctionHMU}) state that
\begin{align}
\sum_{i = 1}^K b_i^* &= L, \label{constraint summation}  \\ 
0 \leq \b_i^* \leq 1, \mu_i^* \geq 0, \ w_i^* \geq 0, \mu_i^*b_i^* &= 0, \forall i = 1, \dots K  \label{constraint no 1} \\ w_i^*(b_i^*-1) &= 0, \forall i = 1, \dots K,   \label{constraint no 2} \\\notag
{\tt P}_{{\rm R}_i}n(1-b_i^*{p}_{{\rm c}_{i}}^{(1)})^{n-1}{p}_{{\rm c}_{i}}^{(1)} \\+ \nu^* - \mu_i^* +w_i^* &= 0, \forall i = 1, \dots K. \label{constraint no 3}
\end{align}
The optimal cache placement under policy 1 is given next. 
\begin{theorem} \label{Theorem 1}
The optimal caching probability of file ${\cal F}_i$ denoted by $b_i^{*}$, that maximizes the HP (with a maximum $n$ transmissions) for a mobile user under policy 1, is given by,
\begin{align}
b_i^{*} &= \left\{
     \begin{array}{lr}
       0, \qquad \qquad \nu^* < -{\tt P}_{{\rm R}_i}n{p}_{{\rm c}_{i}}^{(1)}\\
       1, \qquad  \qquad \nu^* > -{\tt P}_{{\rm R}_i}n{p}_{{\rm c}_{i}}^{(1)}(1-{p}_{{\rm c}_{i}}^{(1)})^{n-1}\\
      \frac{1}{{p}_{{\rm c}_{i}}^{(1)}}\big[ 1 - \big(\frac{-v^*}{{\tt P}_{{\rm R}_i}n{p}_{{\rm c}_{i}}^{(1)}}\big)^{\frac{1}{n-1}}\big] , \qquad   \text{otherwise}
      \end{array}
   \right.
\end{align}
where $\nu^* = -{\tt P}_{{\rm R}_i}n(1-b_i^*{p}_{{\rm c}_{i}}^{(1)})^{n-1}{p}_{{\rm c}_{i}}^{(1)}$ can be obtained as the solution of the constraint $\sum_{i = 1}^K b_i^* = L$. 
\end{theorem}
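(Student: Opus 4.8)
The plan is to obtain $b_i^*$ in closed form by solving the concave program (\ref{Eq:OptFunctionHMU}) through its KKT system (\ref{constraint summation})--(\ref{constraint no 3}), but organized so that the box constraints $0\le b_i\le 1$ are handled by a clamping argument rather than by juggling the slack variables $\mu_i^*,w_i^*$ directly. First I would certify sufficiency of the stationarity relation: for $n\ge 2$ each term satisfies $\frac{\partial^2}{\partial b_i^2}{\tt P}_{{\rm R}_i}\big[1-(1-b_i{p}_{{\rm c}_{i}}^{(1)})^{n}\big]=-{\tt P}_{{\rm R}_i}n(n-1)({p}_{{\rm c}_{i}}^{(1)})^2(1-b_i{p}_{{\rm c}_{i}}^{(1)})^{n-2}\le 0$, so the separable objective of (\ref{Eq:OptFunctionHMU}) is concave; its constraints are affine and possess a strictly feasible point ($b_i=L/K\in(0,1)$ when $0<L<K$), so Slater's condition holds and the KKT conditions (\ref{constraint summation})--(\ref{constraint no 3}) are both necessary and sufficient for the global optimum, as already asserted before the theorem.

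Next I would dualize only the sum constraint and, for a fixed multiplier $\nu$, maximize the separable Lagrangian $\sum_{i}\big[{\tt P}_{{\rm R}_i}(1-(1-b_i{p}_{{\rm c}_{i}}^{(1)})^{n})+\nu b_i\big]$ over the box $[0,1]^K$. Each coordinate is a one-dimensional concave maximization whose derivative is $g_i(b)+\nu$ with $g_i(b):={\tt P}_{{\rm R}_i}n{p}_{{\rm c}_{i}}^{(1)}(1-b{p}_{{\rm c}_{i}}^{(1)})^{n-1}$ strictly decreasing in $b$; hence the maximizer is the unconstrained stationary point $\hat b_i(\nu)$, solving $g_i(\hat b_i)=-\nu$, clamped to $[0,1]$. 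Inverting $g_i(\hat b_i)=-\nu$ gives exactly the ``otherwise'' branch $\hat b_i(\nu)=\frac{1}{{p}_{{\rm c}_{i}}^{(1)}}\big[1-\big(\tfrac{-\nu}{{\tt P}_{{\rm R}_i}n{p}_{{\rm c}_{i}}^{(1)}}\big)^{1/(n-1)}\big]$. The clamping then reads off the endpoint derivatives: $g_i(0)+\nu\le 0$, i.e. $\nu<-{\tt P}_{{\rm R}_i}n{p}_{{\rm c}_{i}}^{(1)}$, makes the coordinate objective decreasing on $[0,1]$ and forces $b_i^*=0$; $g_i(1)+\nu\ge 0$, i.e. $\nu>-{\tt P}_{{\rm R}_i}n{p}_{{\rm c}_{i}}^{(1)}(1-{p}_{{\rm c}_{i}}^{(1)})^{n-1}$, makes it increasing and forces $b_i^*=1$; in between, $\hat b_i(\nu)\in(0,1)$ is interior. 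Since $0<{p}_{{\rm c}_{i}}^{(1)}<1$ gives $g_i(1)<g_i(0)$, these two thresholds are correctly ordered and the three cases partition the line, reproducing the statement of Theorem~\ref{Theorem 1}.

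Finally I would determine $\nu^*$ from the primal feasibility (\ref{constraint summation}): each clamped coordinate $b_i^*(\nu)$ is continuous and nondecreasing in $\nu$ (zero below its lower threshold, rising through $\hat b_i(\nu)$, saturating at $1$ above its upper threshold), so $\sum_{i=1}^{K}b_i^*(\nu)$ increases continuously from $0$ to $K$ and the budget equation $\sum_{i=1}^{K}b_i^*(\nu^*)=L$ has a solution $\nu^*$ whenever $0<L<K$, obtained numerically. I expect the main obstacle to be sign bookkeeping rather than any hard estimate: one must keep the directions of the complementary-slackness inequalities in (\ref{constraint no 1})--(\ref{constraint no 3}) consistent so that the $b_i^*=0$ and $b_i^*=1$ thresholds appear with the orientations in Theorem~\ref{Theorem 1}, and then invoke the strict monotonicity of $g_i$ both to order those thresholds and to guarantee that $\sum_i b_i^*(\nu)$ is monotone, which is what makes $\nu^*$ well defined. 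With that structure settled, matching the interior branch continuously to the two clamped branches and solving the scalar equation for $\nu^*$ is routine.
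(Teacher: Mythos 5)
Your proposal is correct and lands on exactly the three-case solution of Theorem~\ref{Theorem 1}, but it organizes the KKT analysis differently from the paper. The paper keeps all three sets of multipliers $(\nu,\boldsymbol{\mu},\mathbf{w})$, eliminates $w_i^*$ using stationarity together with $\mu_i^*b_i^*=0$, and substitutes into $w_i^*(b_i^*-1)=0$ to obtain the product identity $b_i^*(b_i^*-1)\big[-{\tt P}_{{\rm R}_i}n(1-b_i^*{p}_{{\rm c}_{i}}^{(1)})^{n-1}{p}_{{\rm c}_{i}}^{(1)}-\nu^*\big]=0$, from which the interior condition and the interval for $\nu^*$ are read off. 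You instead dualize only the budget constraint and solve each box-constrained scalar problem by clamping the stationary point, using the strict monotonicity of $g_i(b)={\tt P}_{{\rm R}_i}n{p}_{{\rm c}_{i}}^{(1)}(1-b{p}_{{\rm c}_{i}}^{(1)})^{n-1}$; the endpoint-derivative tests $g_i(0)+\nu\le 0$ and $g_i(1)+\nu\ge 0$ reproduce the two threshold cases with the correct orientations and ordering. The two derivations are equivalent in substance, but yours buys three things the paper leaves implicit: an explicit verification of concavity (the paper only asserts ``it can be shown''), a proof that the two thresholds are correctly ordered so the cases partition the line, and a continuity-plus-monotonicity argument showing that $\sum_i b_i^*(\nu)$ sweeps from $0$ to $K$, so that $\nu^*$ solving $\sum_i b_i^*(\nu^*)=L$ actually exists for $0<L<K$ --- a point the paper takes for granted. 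Two minor caveats: the sum $\sum_i b_i^*(\nu)$ is nondecreasing rather than strictly increasing (it is flat where every coordinate is clamped), so $\nu^*$ need not be unique even though the resulting $\mathbf{b}^*$ is still optimal; and, like the paper, your interior formula presupposes $n\ge 2$, since the exponent $1/(n-1)$ degenerates at $n=1$ where the objective becomes linear and the solution is bang-bang.
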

\begin{proof}
From (\ref{constraint no 1}) and  (\ref{constraint no 3}), we have
\begin{align}
w_i^* = b_i^*[-{\tt P}_{{\rm R}_i}n(1-b_i^*{p}_{{\rm c}_{i}}^{(1)})^{n-1}{p}_{{\rm c}_{i}}^{(1)} - \nu^*],
\end{align}
which when inserted into (\ref{constraint no 2}) gives
\begin{align} \label{constraint no 4}
b_i^*(b_i^*-1)[-{\tt P}_{{\rm R}_i}n(1-b_i^*{p}_{{\rm c}_{i}}^{(1)})^{n-1}{p}_{{\rm c}_{i}}^{(1)} - \nu^*] = 0.
\end{align}
From (\ref{constraint no 4}), we see that $0 < b_i^{*}<1$ only if,
\begin{align} 
\nu^* = -{\tt P}_{{\rm R}_i}n(1-b_i^*{p}_{{\rm c}_{i}}^{(1)})^{n-1}{p}_{{\rm c}_{i}}^{(1)}.
\end{align}
Since we know that $0 \leq \b_i^* \leq 1$, this implies that
\begin{align} 
\nu^* \in [-{\tt P}_{{\rm R}_i}n{p}_{{\rm c}_{i}}^{(1)},-{\tt P}_{{\rm R}_i}n{p}_{{\rm c}_{i}}^{(1)}(1-{p}_{{\rm c}_{i}}^{(1)})^{n-1}].
\end{align}
For the above interval, solving for $\nu^*$ using the constraint $\sum_{i = 1}^K b_i^* = L$, we get,
\begin{align}
\notag \sum_{i = 1}^K  \frac{1}{{p}_{{\rm c}_{i}}^{(1)}}\big[ 1 - \big(\frac{-v^*}{{\tt P}_{{\rm R}_i}n{p}_{{\rm c}_{i}}^{(1)}}\big)^{\frac{1}{n-1}}\big] &= L \\ 
\bigg(\frac{-v^*}{n{{p}_{{\rm c}}^{(1)}}}\bigg)^{\frac{1}{n-1}} &\stackrel{(a)} = \frac{K-L{p}_{{\rm c}}^{(1)}}{ \sum_{j = 1}^K \big(\frac{1}{{\tt P}_{{\rm R}_j}}\big)^{\frac{1}{n-1}}} \label{Eq:Golden1}
\end{align}
where (a) results by using ${p}_{{\rm c}_{i}}^{(1)} = {p}_{{\rm c}}^{(1)},  \forall i= 1, \dots K$ and rearranging few terms. Also, it can be seen that for $\nu^* < -{\tt P}_{{\rm R}_i}n{p}_{{\rm c}_{i}}^{(1)}, b_i^* = 0$ and if $v^* > -{\tt P}_{{\rm R}_i}n{p}_{{\rm c}_{i}}^{(1)}(1-{p}_{{\rm c}_{i}}^{(1)})^{n-1}$, $b_i^* = 1$.
\end{proof}
In order to provide intuition, we specialize the above result to the simple case of 2 files in the library ($K = 2$) and unitary storage space ($L = 1$) in the SCBS.
\begin{corollary} \label{Cor1}
The optimal value ($b_1^*,b_2^*$) obtained by solving the optimization problem (\ref{Eq:OptFunctionHMU}) for $K = 2$ is
\begin{align}
b_1^{*} &= \left\{
     \begin{array}{lr}
       1, \qquad \qquad n < 1+ \frac{\gamma}{\log_2\big(\frac{1}{{1-{p}_{{\rm c}_{i}}^{(1)}}}\big)}\\
       \frac{a-1+{p}_{{\rm c}_{i}}^{(1)}}{(a+1){p}_{{\rm c}_{i}}^{(1)}} , \qquad   \text{otherwise}
      \end{array}
   \right. ,
\end{align}
where $a = 2^{\frac{\gamma}{n-1}}$, $\gamma$ is the Zipf parameter and $b_2^* = 1 - b_1^*$.
\end{corollary}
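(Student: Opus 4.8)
The plan is to specialize Theorem~\ref{Theorem 1} to $K=2$ and $L=1$, for which the cache-agnostic coverage probability is common to both files, ${p}_{{\rm c}_1}^{(1)}={p}_{{\rm c}_2}^{(1)}={p}_{{\rm c}}^{(1)}$, and the budget constraint $b_1+b_2=1$ forces $b_2^*=1-b_1^*$. Since file~$1$ is the more popular, only two of the three branches in Theorem~\ref{Theorem 1} can be active: either both caching probabilities are interior, lying in $(0,1)$, or we sit at the corner $b_1^*=1,\ b_2^*=0$; the complementary corner $b_1^*=0$ never arises. I would first derive the interior expression in closed form and then identify the value of $n$ at which it leaves the feasible interval.

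For the interior regime I would take the ``otherwise'' branch of Theorem~\ref{Theorem 1} for $b_1^*$ and eliminate $\nu^*$ via the balance equation~(\ref{Eq:Golden1}) evaluated at $K=2$, $L=1$. The crucial simplification is the Zipf ratio ${\tt P}_{{\rm R}_1}/{\tt P}_{{\rm R}_2}=2^{\gamma}$, which gives $\big(1/{\tt P}_{{\rm R}_2}\big)^{1/(n-1)}=a\,\big(1/{\tt P}_{{\rm R}_1}\big)^{1/(n-1)}$ with $a=2^{\gamma/(n-1)}$, so that the denominator $\sum_{j}\big(1/{\tt P}_{{\rm R}_j}\big)^{1/(n-1)}=(1+a)\big(1/{\tt P}_{{\rm R}_1}\big)^{1/(n-1)}$ shares a common factor with the numerator. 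That factor cancels, leaving a quantity depending on the popularities only through their ratio, and routine algebra collapses it to
\begin{align}
b_1^{*}=\frac{1}{{p}_{{\rm c}}^{(1)}}\Big[1-\frac{2-{p}_{{\rm c}}^{(1)}}{1+a}\Big]=\frac{a-1+{p}_{{\rm c}}^{(1)}}{(1+a){p}_{{\rm c}}^{(1)}},\notag
\end{align}
which is the stated ``otherwise'' value.

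To pin down the threshold I would set this interior value equal to $1$, the point past which the unconstrained solution would exceed the unit storage and must be clamped. Solving $b_1^*=1$ reduces to $a\big(1-{p}_{{\rm c}}^{(1)}\big)=1$, i.e.\ $a=1/\big(1-{p}_{{\rm c}}^{(1)}\big)$; inserting $a=2^{\gamma/(n-1)}$ and taking $\log_2$ gives the boundary $n=1+\gamma/\log_2\big(1/(1-{p}_{{\rm c}}^{(1)})\big)$. Finally, writing $b_1^*=\frac{1}{{p}_{{\rm c}}^{(1)}}\big(1-\frac{2-{p}_{{\rm c}}^{(1)}}{1+a}\big)$ shows it is strictly increasing in $a$ (as $2-{p}_{{\rm c}}^{(1)}>0$), while $a=2^{\gamma/(n-1)}$ is strictly decreasing in $n$; hence $b_1^*$ decreases in $n$. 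Thus for $n$ below the threshold the interior formula exceeds $1$ and the optimizer is clamped to $b_1^*=1$, while above it the interior formula is feasible and gives the stated value.

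I expect the only real bookkeeping hurdle to be the cancellation step that removes the absolute popularities, together with the minor point of verifying that the corner $b_1^*=0$ is inactive. Both are settled by the observation that $a>1$ for every finite $n$, which keeps the interior value above $1/2$ and makes $b_1^*=1$ the only relevant boundary.
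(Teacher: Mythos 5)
Your proposal is correct and follows essentially the same route as the paper's Appendix A: specialize the balance equation (\ref{Eq:Golden1}) to $K=2$, $L=1$, use the Zipf ratio ${\tt P}_{{\rm R}_1}/{\tt P}_{{\rm R}_2}=2^{\gamma}$ to collapse the interior branch of Theorem~\ref{Theorem 1} to $\frac{a-1+{p}_{{\rm c}}^{(1)}}{(1+a){p}_{{\rm c}}^{(1)}}$, and discard the $b_1^*=0$ branch as requiring $n<1$. The only (immaterial) difference is that the paper converts the $\nu^*$-interval conditions of Theorem~\ref{Theorem 1} directly into conditions on $n$, whereas you locate the threshold by setting the interior value to $1$ and invoking monotonicity in $a$ and $n$ — the same computation in a slightly different order.
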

\begin{proof}
See Appendix A.
\end{proof}

From Corollary \ref{Cor1}, we can see that it is optimal to cache only the most popular file ${\cal F}_1$ till a certain number of transmissions, however an optimal cache (cache both ${\cal F}_1$ and ${\cal F}_2$) exists as the number of transmissions increase. Also, for $n =1 $ (single transmission), it is always optimal to cache the most popular file in a 2-file library scenario.

For the simplicity of exposition, we will henceforth consider $L = 1$ while deriving the optimal solutions. A similar approach can be undertaken to derive the optimal solutions for the generic $L$ storage system.

\subsubsection{Policy 2}
In this policy, the user connects to the closest SCBS which has the file of interest in its cache. The key difference in the mathematical formulation under this policy compared to ${\cal P} 1$ is that the success probability is not weighted by caching probability of the file as the user is always connected to the SCBS that has the file of interest in its cache. Hence, the success probability of obtaining file ${\cal F}_i$ under policy 2 is the same as its coverage probability denoted by ${p}_{{\rm c}_{i}}^{(2)}$, which has been derived for a similar scenario in \cite[Theorem 1]{KriDhi} as: 
\begin{align} 
&{p}_{{\rm s}_i} = {p}_{{\rm c}_{i}}^{(2)} = \frac{b_i}{b_i + \rho_1(T,\alpha)+(1-b_i)\rho_2(T,\alpha)},   \label{SuccProbP2} 
\\ &\text{where} \: \rho_2(T,\alpha) = T^{2/\alpha} \int_{0}^{T^{-2/\alpha}} \frac{\ud u}{1+u^{\alpha/2}},    \label{rho 2}
\end{align}
and $\rho_1(T,\alpha)$ is defined in (\ref{Eq:rho1}).
\normalsize
\\ The solution of the optimization problem (\ref{Eq:OptFunctionHMU}) under policy 2 can be determined on the same lines as Theorem \ref{Theorem 1} (policy 1) by using the success probability ${p}_{{\rm s}_i}$ defined above. The optimal solution is stated below in Theorem \ref{Theorem 2}.
\begin{theorem} \label{Theorem 2}
The optimal caching probability of file ${\cal F}_i$ denoted by $b_i^{*}$, that maximizes the hit probability (with a maximum $n$ transmissions) for a  mobile user under policy 2, is given by,
\begin{align}
b_i^{*} &= \left\{
     \begin{array}{lr}
       0, \qquad \qquad \nu^* < \frac{-{\tt P}_{{\rm R}_i}n}{C}\\
       1, \qquad  \qquad \nu^* > \frac{-{\tt P}_{{\rm R}_i}nC(B+C-1)^{n-1}}{(B+C)^{n+1}}\\
      \phi(\nu^*) , \qquad   \text{otherwise}
      \end{array}
   \right. ,
\end{align}
where $\phi(v^*)$ is the solution over $b_i$ of
\begin{align} \label{Eq:Polynomial Equality P2}
\frac{{\tt P}_{{\rm R}_i}nC((B-1)b_i+C)^{n-1}}{(Bb_i+C)^{n+1}}+\nu^* = 0,
\end{align} 
$B = 1-\rho_2(T,\alpha)$, $C = \rho_1(T,\alpha) + \rho_2(T,\alpha)$ and $\nu^*$ can be obtained as the solution of the constraint $\sum_{i = 1}^K b_i^* = 1$.
\end{theorem}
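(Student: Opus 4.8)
The plan is to reproduce the Lagrangian/KKT argument of Theorem~\ref{Theorem 1} verbatim, changing only the per-file objective. Under policy 2 the success probability (\ref{SuccProbP2}) can be written compactly as ${p}_{{\rm s}_i} = b_i/(B b_i + C)$ with $B = 1-\rho_2(T,\alpha)$ and $C = \rho_1(T,\alpha)+\rho_2(T,\alpha)$, so the objective contribution of file ${\cal F}_i$ is $f_i(b_i) = {\tt P}_{{\rm R}_i}(1-(1-{p}_{{\rm s}_i})^n)$. First I would record the two ingredients needed to treat the KKT conditions as necessary and sufficient. The sign information comes for free: since $\rho_2(T,\alpha) = T^{2/\alpha}\int_0^{T^{-2/\alpha}}\frac{\ud u}{1+u^{\alpha/2}} < T^{2/\alpha}\int_0^{T^{-2/\alpha}}\ud u = 1$, we have $B>0$ and $C>0$. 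Concavity of $b_i\mapsto f_i(b_i)$ on $[0,1]$ is then the one genuinely new point: in Theorem~\ref{Theorem 1} the success probability was linear in $b_i$, whereas here ${p}_{{\rm s}_i}$ is fractional linear in $b_i$, so I would establish concavity by verifying that the marginal below is strictly decreasing, which hinges precisely on $B>0$.

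The marginal follows from the chain rule. Writing $q(b_i) = 1-{p}_{{\rm s}_i} = ((B-1)b_i+C)/(Bb_i+C)$ gives $q'(b_i) = -C/(Bb_i+C)^2$, so
\begin{align}\label{MarginalP2}
\frac{\partial f_i}{\partial b_i} = -{\tt P}_{{\rm R}_i}n\,q(b_i)^{n-1}q'(b_i) = \frac{{\tt P}_{{\rm R}_i}nC((B-1)b_i+C)^{n-1}}{(Bb_i+C)^{n+1}}.
\end{align}
With (\ref{MarginalP2}) in hand the Lagrangian and the four stationarity/slackness conditions take exactly the shape of (\ref{constraint summation})--(\ref{constraint no 3}), with $\sum_{i=1}^K b_i^*=1$ as the budget and (\ref{MarginalP2}) in place of the policy-1 gradient. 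I would then eliminate the box multipliers $\mu_i^*,w_i^*$ exactly as in the proof of Theorem~\ref{Theorem 1}: complementary slackness collapses to the analogue of (\ref{constraint no 4}), namely $b_i^*(b_i^*-1)\big[\frac{\partial f_i}{\partial b_i}+\nu^*\big]=0$, whence an interior optimum $0<b_i^*<1$ forces $\frac{\partial f_i}{\partial b_i}+\nu^*=0$, which is precisely (\ref{Eq:Polynomial Equality P2}) and defines $b_i^*=\phi(\nu^*)$.

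The two saturating regimes drop out of evaluating (\ref{MarginalP2}) at the endpoints. As $f_i$ is concave its marginal decreases on $[0,1]$, so $b_i^*=0$ is optimal exactly when even the largest marginal $\frac{\partial f_i}{\partial b_i}\big|_{b_i=0}={\tt P}_{{\rm R}_i}n/C$ fails to reach the level $-\nu^*$, i.e. $\nu^*<-{\tt P}_{{\rm R}_i}n/C$, while $b_i^*=1$ is optimal when even the smallest marginal $\frac{\partial f_i}{\partial b_i}\big|_{b_i=1}={\tt P}_{{\rm R}_i}nC(B+C-1)^{n-1}/(B+C)^{n+1}$ still exceeds it, i.e. $\nu^*>-{\tt P}_{{\rm R}_i}nC(B+C-1)^{n-1}/(B+C)^{n+1}$. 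Substituting this three-regime description of $b_i^*$ into $\sum_{i=1}^K b_i^*=1$ pins down $\nu^*$.

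I expect the main obstacle to be the interior case rather than the bookkeeping, which is mechanical once (\ref{MarginalP2}) is available. In Theorem~\ref{Theorem 1} the stationarity equation was a single power of $(1-b_i{p}_{{\rm c}_{i}}^{(1)})$ and inverted in closed form; here equating (\ref{MarginalP2}) to $-\nu^*$ and clearing denominators produces a polynomial in $b_i$ of degree $n+1$ with no closed-form root for general $n$. Consequently the interior value must be left implicit as the root $\phi(\nu^*)\in(0,1)$ of (\ref{Eq:Polynomial Equality P2}), and $\nu^*$ cannot be isolated explicitly the way it was in (\ref{Eq:Golden1}); existence and uniqueness of the relevant root follow from the strict monotonicity of (\ref{MarginalP2}) secured in the concavity step.
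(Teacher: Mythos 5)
Your proposal is correct and follows exactly the route the paper intends: the paper justifies Theorem~\ref{Theorem 2} only by remarking that it is obtained ``on the same lines as Theorem~\ref{Theorem 1}'', and your KKT/water-filling argument with the policy-2 marginal $\frac{{\tt P}_{{\rm R}_i}nC((B-1)b_i+C)^{n-1}}{(Bb_i+C)^{n+1}}$ is precisely that argument carried out, with the endpoint thresholds matching the evaluations of this marginal at $b_i=0$ and $b_i=1$. The one detail you supply beyond the paper --- concavity of the fractional-linear objective, which rests on $B=1-\rho_2(T,\alpha)>0$ together with $B-1=-\rho_2(T,\alpha)<0$ so that the marginal is strictly decreasing --- is indeed needed for the KKT conditions to be sufficient and for $\phi(\nu^*)$ to be well defined, and it checks out.
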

The success probability ${p}_{{\rm s}_i}$ for policy 2 (given by (\ref{SuccProbP2})), being more complicated than policy 1 (given by (\ref{Eq:SuccProbP1})) makes the
optimal solution in this case harder to obtain. This requires solving the polynomial equalities of the form (\ref{Eq:Polynomial Equality P2}), which may not give simple closed form solutions. Therefore, we limit our further discussion in providing the optimal solution only for the extreme cases ($n = 1$ and $n \rightarrow \infty)$ for a user following $\ncalP 2$.
\begin{corollary} (Single transmission - Policy 2) \label{Cor2}
The optimal caching probability of file ${\cal F}_i$ denoted by $b_i^{*}$, that maximizes the HP for a mobile user under $\ncalP 2$ with a maximum of one attempt ($n =1$), is given by,
\begin{align}
&b_i^{*} = \bigg[\frac{\sqrt{\frac{{\tt P}_{{\rm R}_i}}{\epsilon}}- (\rho_1(T,\alpha)+\rho_2(T,\alpha))}{1-\rho_2(T,\alpha)}\bigg]^{+}, i = 1, \dots K, 
\end{align}
where $ [x]^+ = {\tt max}(0,x), \: \sqrt{\epsilon} = \frac{\sum_{i=1}^{K^*}\sqrt{{\tt P}_{{\rm R}_i}}}{(K^*-1)\rho_1(T,\alpha)+K^*\rho_2(T,\alpha)+1}$ and $K^*, \: 1 \leq K^* \leq K$, satisfies the constraint that $0 \leq b_i^* \leq 1$.  Here $\rho_1(T,\alpha)$ and  $\rho_2(T,\alpha)$ are defined in (\ref{Eq:rho1}) and (\ref{rho 2}) respectively. 
\end{corollary}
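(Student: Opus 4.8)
The plan is to recognize this as the $n=1$ specialization of Theorem \ref{Theorem 2}, for which the objective collapses to a sum of linear-fractional terms admitting a clean closed form. First I would substitute $n=1$ and $L=1$ into Problem (\ref{Eq:OptFunctionHMU}), so that $1-(1-{p}_{{\rm s}_i})^{1}={p}_{{\rm s}_i}$ and the objective becomes $\sum_{i=1}^K {\tt P}_{{\rm R}_i}\,{p}_{{\rm s}_i}$ with ${p}_{{\rm s}_i}={p}_{{\rm c}_{i}}^{(2)}=\frac{b_i}{Bb_i+C}$, where $B=1-\rho_2(T,\alpha)$ and $C=\rho_1(T,\alpha)+\rho_2(T,\alpha)$ exactly as in Theorem \ref{Theorem 2}. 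Each summand is concave in $b_i$ (its second derivative $-2{\tt P}_{{\rm R}_i}BC/(Bb_i+C)^3$ is negative since $B,C>0$ and $Bb_i+C>0$), so the objective is concave on the simplex $\{\sum_i b_i=1,\ 0\le b_i\le1\}$ and the KKT conditions remain necessary and sufficient.

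Next I would write the stationarity condition. Setting $n=1$ in the interior KKT equation (\ref{Eq:Polynomial Equality P2}) of Theorem \ref{Theorem 2} collapses the polynomial to $\frac{{\tt P}_{{\rm R}_i}C}{(Bb_i^*+C)^2}+\nu^*=0$ for any file cached with $0<b_i^*<1$. Because $Bb_i^*+C>0$, this is solved by a single square root,
\begin{align}
Bb_i^*+C=\sqrt{\frac{-{\tt P}_{{\rm R}_i}C}{\nu^*}},\qquad\text{so}\qquad b_i^*=\frac{1}{B}\Big[\sqrt{\frac{-{\tt P}_{{\rm R}_i}C}{\nu^*}}-C\Big]. \notag
\end{align}
Introducing the reparameterized multiplier $\epsilon:=-\nu^*/C>0$ turns $\sqrt{-{\tt P}_{{\rm R}_i}C/\nu^*}$ into $\sqrt{{\tt P}_{{\rm R}_i}/\epsilon}$ and reproduces the stated interior form $b_i^*=\big(\sqrt{{\tt P}_{{\rm R}_i}/\epsilon}-(\rho_1(T,\alpha)+\rho_2(T,\alpha))\big)/(1-\rho_2(T,\alpha))$.

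It remains to glue in the boundary cases and to pin down $\epsilon$. From complementary slackness (cf. (\ref{constraint no 1})), a file whose interior value would be nonpositive, i.e. with $\sqrt{{\tt P}_{{\rm R}_i}/\epsilon}\le \rho_1(T,\alpha)+\rho_2(T,\alpha)$ (equivalently $\nu^*\le -{\tt P}_{{\rm R}_i}/C$), is not cached; this is precisely what the projection $[\,\cdot\,]^+$ enforces, while the upper bound $b_i^*\le1$ is automatically respected because the $b_i^*$ are nonnegative and sum to one. Since ${\tt P}_{{\rm R}_i}$ is decreasing in $i$, the un-projected values are decreasing in $i$, so the cached files are exactly the $K^*$ most popular ones. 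Substituting the interior expression into $\sum_{i=1}^{K^*}b_i^*=1$ and solving the resulting linear relation for $1/\sqrt{\epsilon}$ in terms of $\sum_{i=1}^{K^*}\sqrt{{\tt P}_{{\rm R}_i}}$ and $K^*$ yields the claimed value of $\sqrt{\epsilon}$. The main obstacle is the self-consistent determination of the active-set size $K^*$: the value of $\epsilon$ depends on $K^*$, while the threshold condition $b_{K^*}^*>0\ge b_{K^*+1}^*$ that defines $K^*$ in turn depends on $\epsilon$. I would resolve this in the standard water-filling fashion, arguing from the monotonicity just noted that sweeping $K^*$ upward from $1$ until the threshold is first satisfied produces the unique feasible $K^*$ for which $0\le b_i^*\le1$ holds for all $i$.
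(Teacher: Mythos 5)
Your route is the same as the paper's: the paper's own proof of this corollary is only a two-line sketch (substitute $n=1$ into Theorem~\ref{Theorem 2}, solve for $\nu^*$ from the constraint $\sum_i b_i^*=1$), and your proposal fills in exactly those steps --- the concavity/KKT justification, the reparameterization $\epsilon=-\nu^*/C$, the complementary-slackness treatment of the boundary via $[\cdot]^+$, and the water-filling determination of the active set $K^*$ --- all of which are sound in structure.

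However, the one algebraic step you defer (``solving the resulting linear relation \dots yields the claimed value of $\sqrt{\epsilon}$'') is precisely where the derivation and the printed statement part ways, and asserting the match without doing the computation is the gap. With $B=1-\rho_2(T,\alpha)$ and $C=\rho_1(T,\alpha)+\rho_2(T,\alpha)$, substituting $b_i^*=(\sqrt{{\tt P}_{{\rm R}_i}/\epsilon}-C)/B$ into $\sum_{i=1}^{K^*}b_i^*=1$ gives $\tfrac{1}{\sqrt{\epsilon}}\sum_{i=1}^{K^*}\sqrt{{\tt P}_{{\rm R}_i}}=B+K^*C=1+K^*\rho_1(T,\alpha)+(K^*-1)\rho_2(T,\alpha)$, i.e.
\begin{align*}
\sqrt{\epsilon}=\frac{\sum_{i=1}^{K^*}\sqrt{{\tt P}_{{\rm R}_i}}}{K^*\rho_1(T,\alpha)+(K^*-1)\rho_2(T,\alpha)+1},
\end{align*}
with the coefficients of $\rho_1$ and $\rho_2$ transposed relative to the corollary as stated. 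A sanity check supports this version rather than the printed one: for $K^*=1$ the formula must return $b_1^*=1$, and indeed $\sqrt{{\tt P}_{{\rm R}_1}/\epsilon}=B+C=1+\rho_1$ yields $b_1^*=(1+\rho_1-C)/B=(1-\rho_2)/(1-\rho_2)=1$, whereas the printed denominator $1+\rho_2$ yields $b_1^*=(1-\rho_1)/(1-\rho_2)\neq 1$ in general. So the corollary appears to contain a typo in $\sqrt{\epsilon}$; your proof is otherwise correct, but you should carry out that final substitution explicitly rather than claiming it reproduces the stated expression, since done carefully it in fact contradicts it.
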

\begin{proof}
The result follows by substituting $n= 1$ (single attempt) in Theorem \ref{Theorem 2}, solving for $v^*$ using the constraint $\sum_{i = 1}^K b_i^* = 1$ and simple mathematical manipulation. The detailed proof is skipped due to space constraints.
\end{proof}
\begin{corollary} (Large number of transmissions) \label{Cor3}
For the mobile user scenario (policy 1 and 2) with the number of transmissions approaching infinity (asymptotically), it is optimal to cache the files in the library evenly  i.e. $\lim\limits_{n \to \infty} {b_i}^* = \frac{1}{K} $, where $K$ is the total number of popular files in the library.
\end{corollary}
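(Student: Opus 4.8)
The plan is to treat both policies at once by working directly with the interior stationarity conditions, since in each case that condition takes the common shape ``a bounded positive prefactor times $(1-p_{{\rm s}_i})^{n-1}$ is the same across all files.'' For policy~1, (\ref{constraint no 3}) with $\mu_i^*=w_i^*=0$ reads ${\tt P}_{{\rm R}_i}\,n\,p_{{\rm c}_{i}}^{(1)}(1-b_i^*p_{{\rm c}_{i}}^{(1)})^{n-1}=-\nu^*$, while for policy~2 the interior condition (\ref{Eq:Polynomial Equality P2}) can be rewritten, using $1-p_{{\rm s}_i}=\big((B-1)b_i+C\big)/(Bb_i+C)$, as $\frac{{\tt P}_{{\rm R}_i} n C}{(Bb_i^*+C)^2}\,(1-p_{{\rm s}_i})^{n-1}=-\nu^*$. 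In both cases I would package this as $Q_i(b_i^*)\,(1-p_{{\rm s}_i})^{n-1}=-\nu^*/n$, where $Q_i$ is continuous, strictly positive, and uniformly bounded on $b_i\in[0,1]$: for policy~1, $Q_i={\tt P}_{{\rm R}_i}p_{{\rm c}_{i}}^{(1)}$ is constant; for policy~2, $Q_i={\tt P}_{{\rm R}_i}C/(Bb_i+C)^2$, with $Bb_i+C$ trapped between the two positive endpoints $C$ and $1+\rho_1$. The mechanism driving the corollary is that taking $(n-1)$-th roots sends these prefactors to $1$, leaving only $1-p_{{\rm s}_i}$, which is then forced to be common across files.

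Before invoking stationarity I would first establish two preliminary facts. First, the optimal hit probability tends to $1$: the feasible uniform allocation $b_i=1/K$ makes $p_{{\rm s}_i}>0$ for every file, so $(1-p_{{\rm s}_i})^{n}\to0$ and its hit probability tends to $\sum_i {\tt P}_{{\rm R}_i}=1$; since the optimum is sandwiched between this value and $1$, it also tends to $1$. Second, I would deduce that for all sufficiently large $n$ the optimizer is strictly interior, i.e.\ $0<b_i^*<1$ for every $i$: if some $b_i^*=0$ then file $i$ is never received and the hit probability is at most $1-{\tt P}_{{\rm R}_i}<1$, while if some $b_i^*=1$ then every other file carries zero mass and the hit probability is at most ${\tt P}_{{\rm R}_1}<1$ (for $K\ge2$); either case contradicts the first fact once $n$ is large. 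Hence the boundary branches of Theorems~\ref{Theorem 1} and \ref{Theorem 2} are inactive for large $n$, and the common stationarity identity holds for all $i$.

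With interiority secured, the core step is the passage to the limit. Equating the identity for two files $i,j$ gives $\big(Q_i(b_i^*)/Q_j(b_j^*)\big)^{1/(n-1)}\,(1-p_{{\rm s}_i})/(1-p_{{\rm s}_j})=1$. Because the prefactor ratio lies in a fixed compact subinterval of $(0,\infty)$ independent of $n$, its $(n-1)$-th root converges to $1$, forcing $(1-p_{{\rm s}_i})/(1-p_{{\rm s}_j})\to1$. The optimizers live in the compact simplex $\{\,b:\sum_i b_i=1,\ b_i\in[0,1]\,\}$, so along any convergent subsequence $b^*\to\bar b$ continuity yields $p_{{\rm s}_i}(\bar b_i)=p_{{\rm s}_j}(\bar b_j)$ for all $i,j$. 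Since $p_{{\rm s}}$ is one and the same strictly increasing function of its argument in each policy (from (\ref{Eq:SuccProbP1}) and (\ref{SuccProbP2})), this gives $\bar b_i=\bar b_j$, and $\sum_i \bar b_i=1$ pins down $\bar b_i=1/K$. As every convergent subsequence shares this limit, the whole sequence converges and $\lim_{n\to\infty}b_i^*=1/K$.

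The step I expect to be most delicate is not the limit itself but the two justifications that feed it: that the optimizer is interior for large $n$ (so the boundary cases of the theorems can be discarded) and that the prefactors $Q_i$ are bounded away from both $0$ and $\infty$ uniformly in $n$ (which is exactly what kills the $(n-1)$-th roots). For policy~1 one can alternatively bypass the subsequence argument by substituting the explicit $\nu^*$ from (\ref{Eq:Golden1}) into the interior expression of Theorem~\ref{Theorem 1} and using $(1/{\tt P}_{{\rm R}_i})^{1/(n-1)}\to1$ and $\sum_j (1/{\tt P}_{{\rm R}_j})^{1/(n-1)}\to K$, giving directly $b_i^*\to\frac{1}{p_{{\rm c}}^{(1)}}\big(1-\frac{K-p_{{\rm c}}^{(1)}}{K}\big)=\frac1K$; the subsequence route is what makes the argument go through for policy~2, where no such closed form is available.
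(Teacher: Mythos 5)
Your proposal is correct and follows essentially the same route as the paper: equate the interior stationarity conditions across files, observe that the $(n-1)$-th root of the bounded prefactor ratio tends to $1$, conclude that the $1-p_{{\rm s}_i}$ must coincide in the limit, and invoke strict monotonicity of the (common) success-probability function to get $b_i^*=b_j^*=1/K$. The only difference is that you supply the supporting details the paper leaves implicit---interiority of the optimizer for large $n$, uniform positive bounds on the prefactors, and the compactness/subsequence argument---which strengthens rather than changes the argument.
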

\begin{proof}
As evident from (\ref{Eq:SuccProbP1}) and (\ref{SuccProbP2}), $p_{{\tt s}_i}$ is a monotonically increasing function of $b_i$ for both caching policies, say $p_{{\tt s}_i} = f(b_i)$. The optimization function (\ref{Eq:OptFunctionHMU}) can be hence generalized as: 
\begin{align} \label{Eq:OptFunHMUNew}
\max_{\{b_i\}} \: \: \sum_{i = 1}^K  {\tt P}_{{\rm R}_i}(1-(1-f(b_i))^{n}), \:  {\rm s.t.} \: \: \sum\limits_{i = 1}^K b_i = L.
\end{align}
Taking the derivative of (\ref{Eq:OptFunHMUNew}) w.r.t. $\{b_i\}_{i=1...K}$, we obtain
\begin{align*}
{\tt P}_{{\rm R}_i}n(1-(1-f(b_i))^{n-1}) &= 0 \qquad \forall \: i = 1,...K, \: \: \text{i.e.}, \\\notag
{\tt P}_{{\rm R}_i}n(1-(1-f(b_i))^{n-1}) & = {\tt P}_{{\rm R}_j}n(1-(1-f(b_j))^{n-1}), \forall i \neq j \\\notag
\frac{1-f(b_i^*)}{1-f(b_j^*)} &= \left(-{\frac{f'(b_j^*)P_{R_j}}{f'(b_i^*)P_{R_i}}}\right)^{\frac{1}{n-1}}, \qquad \forall i \neq j
\end{align*}
For $n \rightarrow \infty$, we hence obtain $f(b_i^*) = f(b_K^*) \: $, or equivalently $ b_i^* = b_j^*$. With $\sum_{i = 1}^K b_i^* = L$, it is therefore optimal to cache evenly i.e. $b_i^{*} = L/K$ for a mobile user with a large number of retransmissions.
\end{proof}

\subsection{Static user}
In this scenario, the user is static at a given location in the network and tries to receive the file of interest within $n$ transmissions. As the user is static, it remains in the same local neighbourhood of transmitters across the $n$ transmissions resulting in temporal correlation of the interferers. As a result, the probability of successfully receiving the file in a given transmission depends on the success probability in the previous transmissions. Let $S_{i,k}$ be an event that denotes that file ${\cal F}_i$ is in coverage during the $k^{\rm th}$ transmission attempt. The probability that file ${\cal F}_i$ is in coverage (atleast) once in $n$ transmissions,  denoted by $p_{{\tt c}_i}^{1|n}$  is given as
\begin{align} 
\notag p_{{\tt c}_i}^{1|n} &= \P(\cup_{k=1}^{n} S_{i,k}) = \P( \cup_{k=1}^{n} \sir_{i,k} > T) \\ &\stackrel{(a)}= \sum_{k=1}^n   { {n} \choose {k}} (-1)^{k+1} {\tt P}_{i,k} \label{JointSuccessStatic},
\end{align}
where (a) follows from the inclusion-exclusion principle and ${\tt P}_{i,k} = \P( \cap_{j=1}^{k} \sir_{i,j} > T)$ is defined as the joint coverage probability of file ${\cal F}_i$ in $k$ transmissions.

As was the case in the mobile user scenario, we define an optimization problem for maximizing the hit probability for the two caching policies ${\cal P} 1$ and ${\cal P} 2$ for a static user.
\subsubsection{Policy 1}
Taking into account that the file of interest ${\cal F}_i$ is cached only with probability $b_i$ in the closest SCBS, we multiply coverage probability $p_{{\tt c}_i}^{1|n}$ by $b_i$ to obtain the success probability in $n$ transmissions i.e.  $p_{{\tt s}_i}^{1|n} =  b_i p_{{\tt c}_i}^{1|n}$. Using (\ref{JointSuccessStatic}) in the above result and substituting in (\ref{Hitprob}), we obtain the following optimization problem.
\begin{align} \label{OptFunStaticP1}
\max_{\{b_i\}} \: &\sum_{i=1}^{K}{\tt P}_{{\rm R}_i}\sum_{k=1}^n  { {n} \choose {k}} (-1)^{k+1} b_i {{\tt P}^{(1)}_{i,k}}, \: {\rm s.t}  &\sum_{i=1}^{K} {b_i}=1,
\end{align}
where ${{\tt P}^{(1)}_{i,k}}$, the joint coverage probability of file ${\cal F}_i$ in $k$ transmissions under $\ncalP 1$ is derived in Appendix B and given below, i.e. ${{\tt P}^{(1)}_{i,k}} =$
\small
\begin{align} \label{Eq:JointSuccProbP1}
\int\limits_0^{\infty }\exp \bigg(-2 \pi \lambda \int\limits_{r_1}^{\infty} \bigg(1-\big(\frac{u^\alpha}{T r^{\alpha}+u^{\alpha}}\big)^{k}\bigg)u {\rm d} u\bigg) f_{R_1}(r_1) {\rm d} r_1,
\end{align}
\normalsize
where $R_1$ denotes the distance of the closest SCBS from the typical user (or distance of the closest point of the PPP $\Phi$ of intensity $\lambda$). The distribution of $R_1$ is hence given from the null probability of a PPP as $f_{R_1}(r_1) = 2\lambda \pi r_1 e^{-\lambda \pi r_1^2}$ ~\cite{HaeB2013}.
\subsubsection{Policy 2}
In policy 2, as the user is always connected to the SCBS that has the file of interest in its cache, the success probability of obtaining ${\cal F}_i$ is the same as its coverage probability i.e $p_{{\tt s}_i}^{1|n} =  p_{{\tt c}_i}^{1|n}$.  Thereby, simply using (\ref{JointSuccessStatic}) in (\ref{Hitprob}), we obtain the following optimization problem for policy 2.
\begin{align} \label{OptFunSUP2}
\max_{\{b_i\}} \: &\sum_{i=1}^{K}{\tt P}_{{\rm R}_i}\sum_{k=1}^n  { {n} \choose {k}} (-1)^{k+1} {{\tt P}^{(2)}_{i,k}}, \: {\rm s.t} \: &\sum_{i=1}^{K} {b_i}=1,
\end{align}
where ${{\tt P}^{(2)}_{i,k}}$, the joint coverage probability of file ${\cal F}_i$ in $k$ transmissions under policy 2 is derived by proceeding similar to policy 1 and is stated below. 
\small
\begin{align*} \label{Eq:JointSuccProbP2}
&\notag {{\tt P}^{(2)}_{i,k}} = \int_0^{\infty }\exp \bigg(-2 \pi (1-b_i) \lambda \int_0^{\infty} \bigg(1-\big(\frac{u^\alpha}{T {r_2}^{\alpha}+u^{\alpha}}\big)^{k}\bigg)u {\rm d} u\bigg) \\ &\exp \bigg(-2 \pi b_i \lambda \int_{r_2}^{\infty} \bigg(1-\big(\frac{u^\alpha}{T {r_2}^{\alpha}+u^{\alpha}}\big)^{k}\bigg)u {\rm d} u\bigg) f_{R_2}(r_2) {\rm d} r_2,
\end{align*}
\normalsize
where $R_2$ denotes the distance of the closest SCBS that has the file of interest ${\cal F}_i$ in its cache. As file ${\cal F}_i$ is cached with probability $b_i$ in the network, the distribution of $R_2$ is thus given by the closest point of the PPP of intensity $\b_i \lambda$ and its distribution is given by $f_{R_2}(r_2) = 2b_i \lambda \pi r_2 e^{-b_i \lambda \pi r_2^2}$. The key difference in the analysis of policy 2 is that the interference field is now divided into two regions: i) Interference from those SCBSs with the file of interest ${\cal F}_i$ in their cache, constituting a PPP of intensity $b_i\lambda$ outside a radius $r_2$ (closest distance of ${\cal F}_i$) and ii) Interference from all other SCBSs (not having ${\cal F}_i$) constituting a PPP of intensity $(1-b_i)\lambda$ in $\R^2$. 

The optimal caching probabilities of a static user under policy 1 and 2 i.e. the solutions of the optimization problem (\ref{OptFunStaticP1}) and (\ref{OptFunSUP2}) can be obtained by proceeding similar to the mobile user scenario (Theorem \ref{Theorem 1} and  \ref{Theorem 2}). Further insights on the optimal caching probabilities for a static user under both caching policies are provided in section \ref{sec5}.


\begin{figure}[t!]
\centering
\begin{subfigure}{}
\centering
\includegraphics[width=0.72 \linewidth]{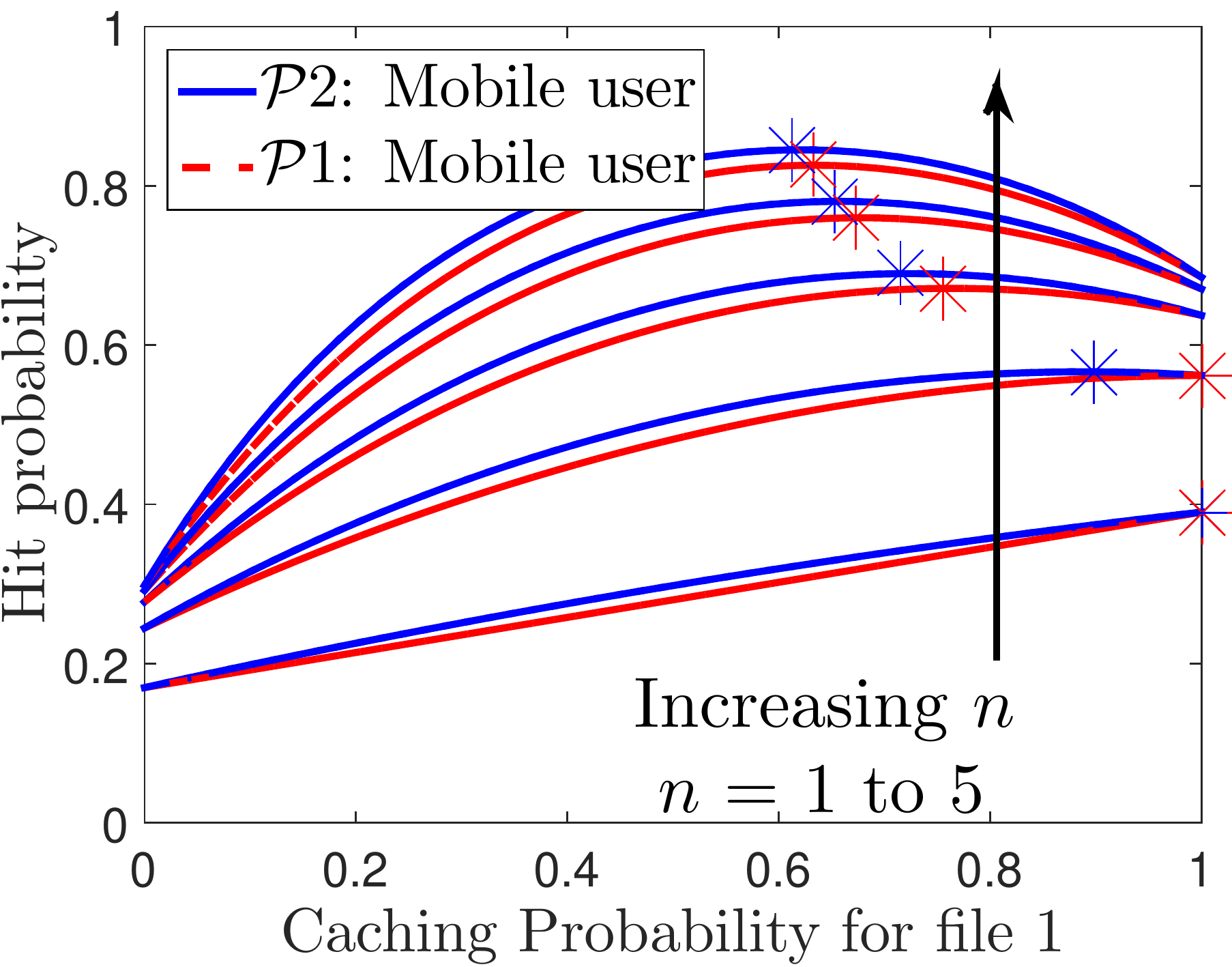}
\label{Fig:Fig: P1 vs P2}
\end{subfigure}
\begin{subfigure}{}
\centering
\includegraphics[width=0.72 \linewidth]{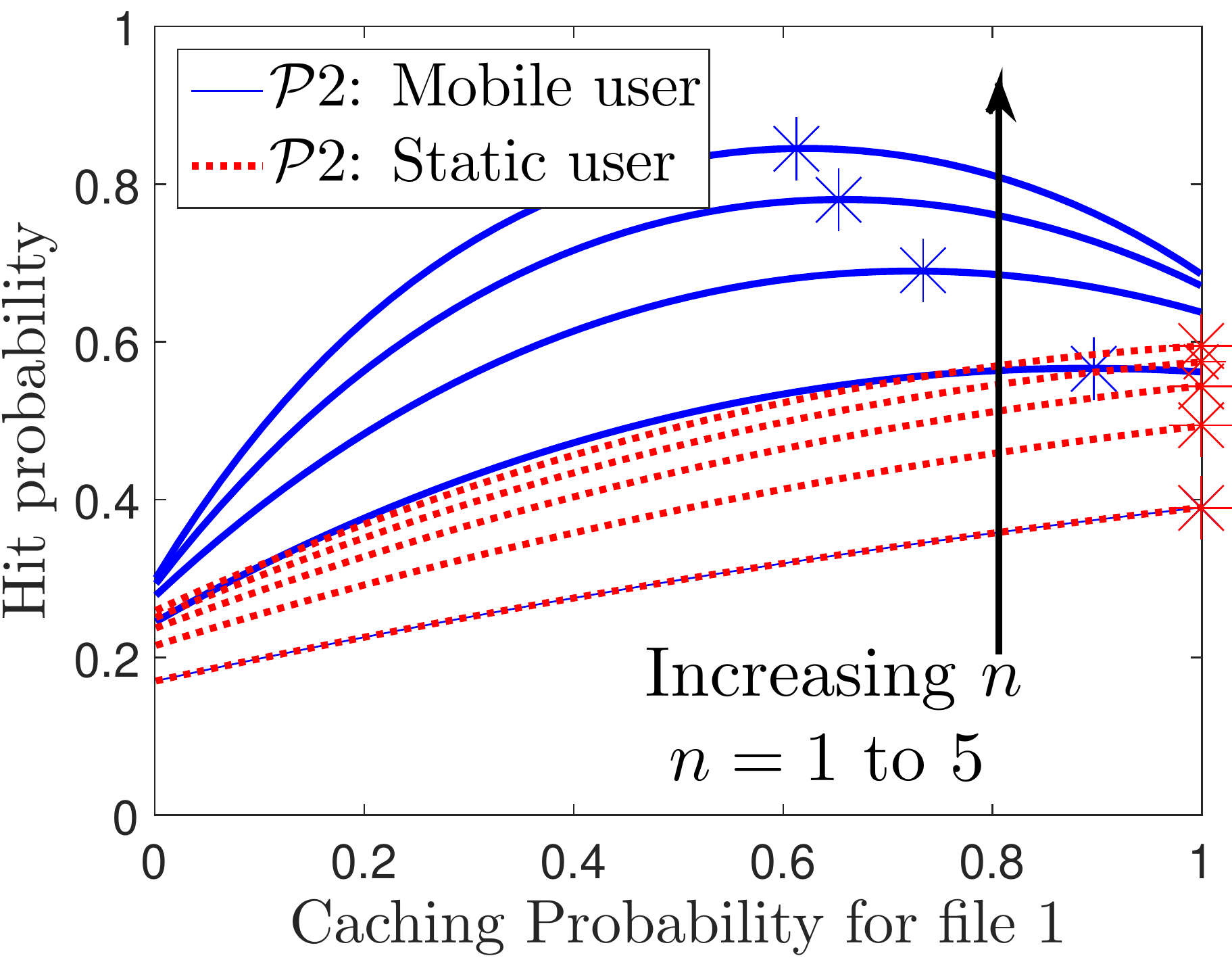}
\label{Fig:Hit probability P2}
\end{subfigure}
\caption{ Effect of cache gathering policy (policy 1 and 2) and mobility (static and mobile user) on the hit probability for varying number of transmissions. ($K = 2$, $L = 1$)}
\label{Fig:EffectMob}
\end{figure}
\section{Results and Discussion} \label{sec5}
For the purpose of numerical results, we consider the Zipf parameter $\gamma = 1.2$ and an SIR threshold $\beta = 0 \: {\tt dB}.$. It is to be noted that in Fig. \ref{Fig:EffectMob}, the asterisk denotes the optimal caching probability. 

\subsubsection{Comparison of policies $\ncalP 1$ and $\ncalP 2$}
Fig. \ref{Fig:EffectMob} (\emph{left.}) compares the two policies in terms of HP for different caching probabilities of file 1, $\ncalF_1$ and varying number of transmissions $n$. It can be seen that the optimal caching probabilities are shifted slightly towards the left in case of policy $\ncalP 2$ and has a slightly higher HP compared to policy $\ncalP 1$. This behaviour can be attributed to the policy mechanism itself, with $\ncalP 2$ exhibiting a higher HP due to the available information of the cache locations. Also, the shift in optimal caching probabilities assert that it is not necessary to cache the most popular file (file $\ncalF_1$) predominantly in the network and allows the SCBSs to cache the lesser requested files to a comparatively larger extent.  Finally, as expected, the HP for both policies $\ncalP 1$ and $\ncalP 2$ increase with the increasing number of transmissions $n$.
\subsubsection{Comparison of the mobile and static cases as a function of $n$ for policy $\ncalP 2$}
Fig. \ref{Fig:EffectMob} (\emph{right}.) compares the HP for the mobile and static cases when a user following policy $\ncalP 2$ has a maximum $n$ transmissions to obtain the file of interest in a 2-file library case ($K = 2$) with unitary storage ($L = 1$). While in the mobile case the retransmissions are made at different locations, in the static case the location of the user is not changed and hence all the retransmissions are made to the same user location. However, we assume that the fading gains are independent across time slots in which these retransmissions are made. The temporal diversity due to independent fading gains increases the HP with $n$, even for the static case. This trend is evident in Fig. \ref{Fig:EffectMob} (\emph{right}.), where the static case is shown by the dotted lines.  As expected, the HP for the mobile user is significantly higher compared to the static user and the gap widens further with the increasing $n$. Similar observations can be for policy ${\cal P}1$ as well.
\subsubsection{Effect of retransmissions on optimal cache placement strategy}
Considering a library of $K=3$ files, Fig. \ref{Fig:ThreeFile} depicts the optimal caching probabilities for a SCBS network with cache size $L$. For smaller number of retransmissions (small $n$), ``{\em Cache the most popular content}" strategy seems to be the optimal cache placement strategy. In other words, it is optimal to cache the $L$ most popular files out of the $K$ files in the library for smaller number of retransmissions. In case of large retransmissions, the optimal caching probabilities asymptotically approaches $L/K$, as we discussed in Corollary \ref{Cor3}.
\subsubsection{Effect of library size ($K$) on the hit probability}
Fig. \ref{Fig: EffectK} depicts the optimal hit probability in a network (for a mobile user following policy 1) with $K$ files in the library and cache size $L = 1$. As evident from the figure, the hit probability decreases with increase in number of files in the library. Larger the number of files in the library, fewer the chances of a file hit from the SCBS cache with a certain cache size. Also, one can observe that the optimal hit probability increases with the number of retransmissions as discussed before.
\begin{figure}[t!]
\centering
\begin{subfigure}{}
\centering
\includegraphics[width=0.72 \linewidth]{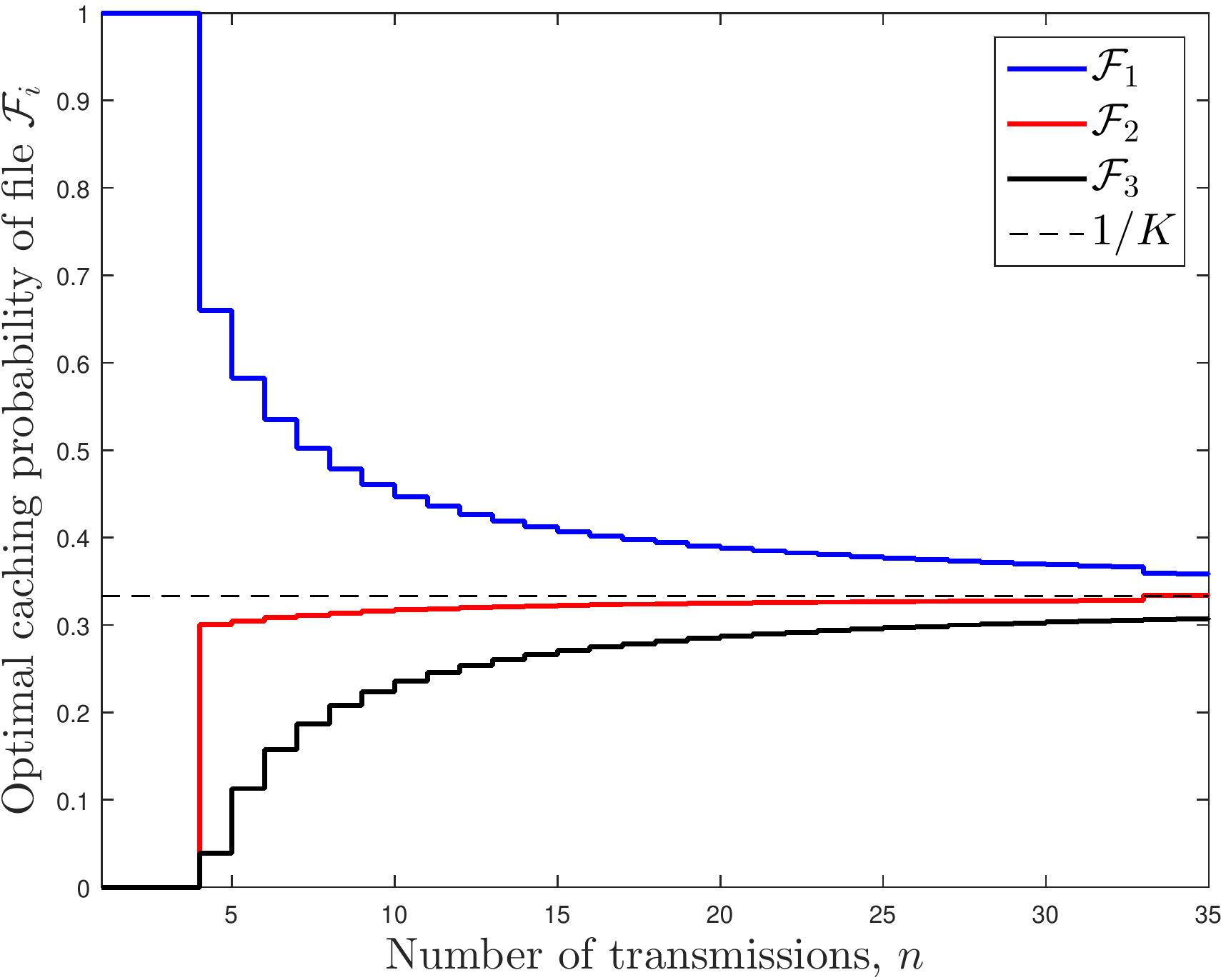}
\label{Fig:ThreeFileL1}
\end{subfigure}
\begin{subfigure}{}
\centering
\includegraphics[width=0.72 \linewidth]{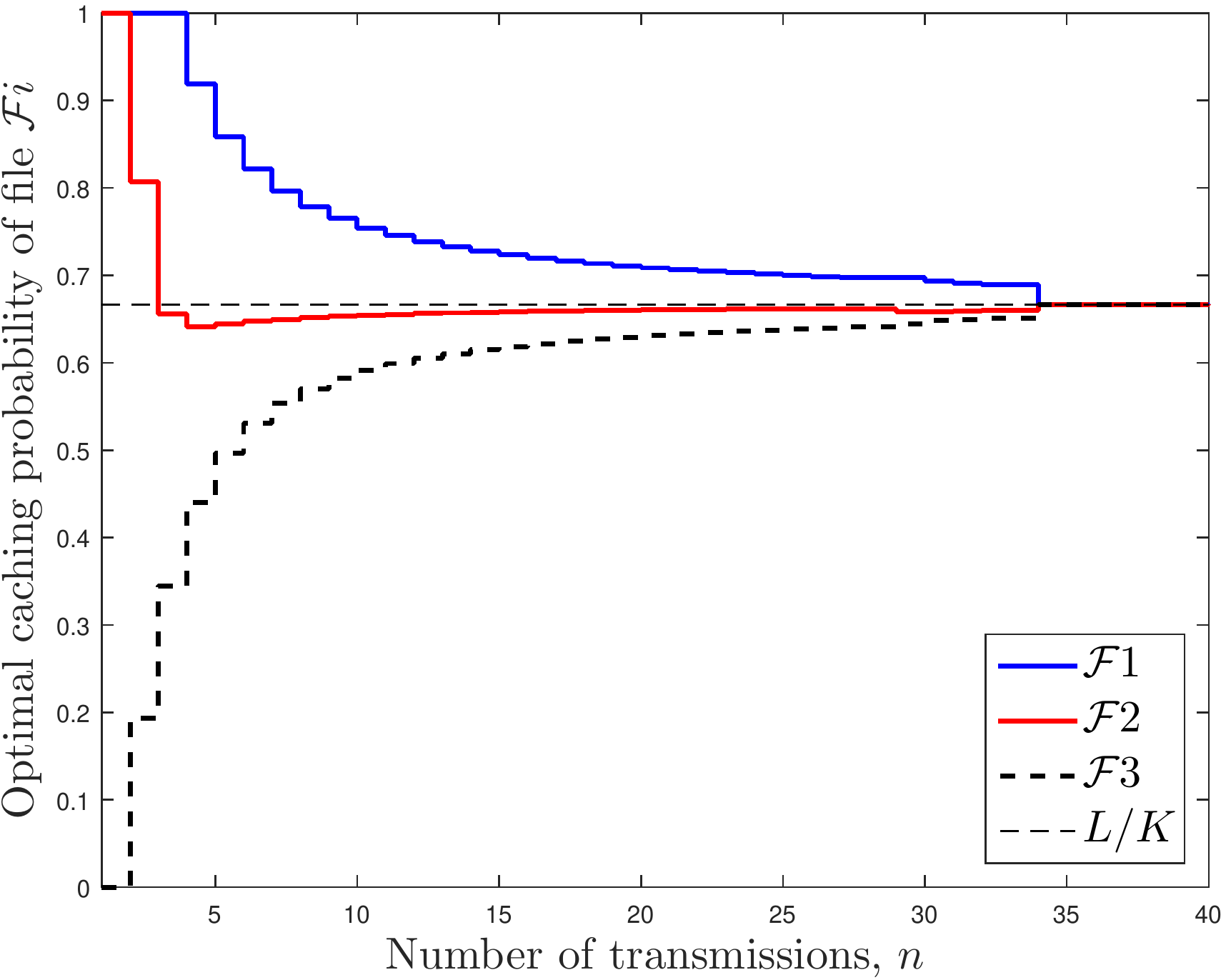}
\label{Fig:ThreeFileL2}
\end{subfigure}
\caption{ Effect of retransmissions on the optimal cache placement strategy for a library of $K = 3$ files with cache storage  $L =1$ {\em (left.)} and $L =2$ {\em (right).} }
\label{Fig:ThreeFile}
\end{figure}

\section{Conclusions}
In this paper, we have studied the effect of retransmissions on the the network performance (in terms of hit probability) for both static and mobile user scenarios in a cache-enabled SCBS network. Intuitively, if the user is allowed to access the caches multiple times (due to retransmissions), it becomes more likely that it will obtain the requested file, either due to better channel conditions or a new neighbourhood of cached SCBSs (due to user mobility). The analytical expressions developed also agree with the intuition and show an increase in hit probability with retransmissions. For a maximum $n$ transmission attempts, we also determine the optimal cache probabilities for both static and mobile user scenarios. The optimal solutions demonstrate that while it is optimal to cache the most popular content for a static user, SCBSs can cache content in a more balanced way for mobile users.
\label{Sec: sec5}

\begin{figure}[!t]
\centering
  \includegraphics[width=0.72 \linewidth]{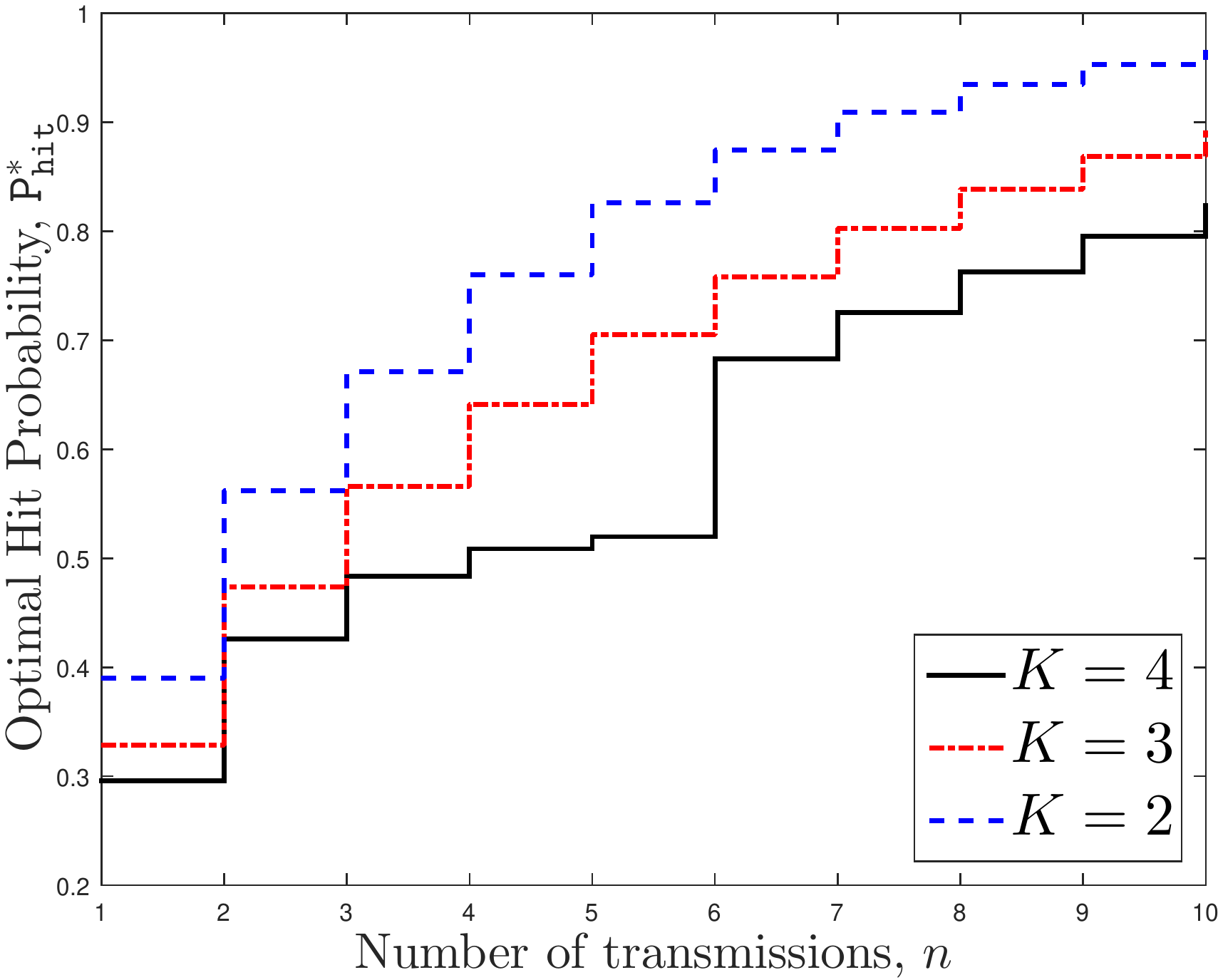}
  \caption{Effect of library size $K$ on the hit probability for a mobile user following policy 1 ($L = 1$).}
  \label{Fig: EffectK}
\end{figure}

\appendix
\subsection{Proof of Corollary \ref{Cor1}}
Using Equation (\ref{Eq:Golden1}) for the case of $K=2$ and $L = 1$, we obtain,
\begin{align}
&\bigg(\frac{-v^*}{n{{p}_{{\rm c}}^{(1)}}}\bigg)^{\frac{1}{n-1}} = \frac{2-{p}_{{\rm c}}^{(1)}}{\big(\frac{1}{{\tt P}_{{\rm R}_1}}\big)^{\frac{1}{n-1}} + \big(\frac{1}{{\tt P}_{{\rm R}_2}}\big)^{\frac{1}{n-1}}} 
\end{align} \label{Eq:2file}
Rearranging a few terms in the intervals of Theorem \ref{Theorem 1}, we obtain
\begin{align*}
b_1^{*} &= \left\{
     \begin{array}{lr}
       0, \qquad \qquad \qquad \qquad  (\frac{-\nu^*}{n{p}_{{\rm c}}^{(1)}})^{\frac{1}{n-1}} > {\tt P}_{{\rm R}_1}^{\frac{1}{n-1}} \\
       1, \qquad  \qquad \qquad \qquad  (\frac{-\nu^*}{n{p}_{{\rm c}}^{(1)}})^{\frac{1}{n-1}} < {\tt P}_{{\rm R}_1}^{\frac{1}{n-1}} (1-{p}_{{\rm c}}^{(1)})\\
      \frac{1}{{p}_{{\rm c}}^{(1)}}\big[ 1 - (\frac{1}{{\tt P}_{{\rm R}_1}})^{\frac{1}{n-1}}\big(\frac{-v^*}{n{p}_{{\rm c}}^{(1)}}\big)^{\frac{1}{n-1}}\big] , \qquad   \text{otherwise}
      \end{array}
   \right. \\\notag
&\stackrel{(a)}= \left\{
     \begin{array}{lr}
       0, \qquad \qquad \qquad \qquad \qquad 1-{p}_{{\rm c}}^{(1)} > (\frac{{\tt P}_{{\rm R}_1}}{{\tt P}_{{\rm R}_2}})^{\frac{1}{n-1}}\\
       1, \qquad  \qquad \qquad \qquad \qquad \frac{1}{1-{p}_{{\rm c}}^{(1)}} < (\frac{{\tt P}_{{\rm R}_1}}{{\tt P}_{{\rm R}_2}})^{\frac{1}{n-1}}\\
      \frac{1}{{p}_{{\rm c}}^{(1)}}\big[1 - \frac{2-{p}_{{\rm c}}^{(1)}}{1 + (\frac{{\tt P}_{{\rm R}_1}}{{\tt P}_{{\rm R}_2}})^{\frac{1}{n-1}}}\big] , \qquad   \text{otherwise}
      \end{array}
   \right. \\\notag
   &\stackrel{(b)}= \left\{
     \begin{array}{lr}
       0, \qquad \qquad \qquad \qquad \qquad n < 1+ \frac{\gamma}{log_2{(1-{p}_{{\rm c}}^{(1)})}}\\
       1, \qquad  \qquad \qquad \qquad \qquad n < 1 +\frac{\gamma}{\log_2({\frac{1}{1-{p}_{{\rm c}}^{(1)}}})}\\
      \frac{a-1+{p}_{{\rm c}_{1}}^{(1)}}{(a+1){p}_{{\rm c}_{1}}^{(1)}} , \qquad  \qquad \qquad \qquad \text{otherwise}
      \end{array}
   \right.
\end{align*}
where (a) follows by using Eq. (\ref{Eq:2file}) and rearranging a few terms. Step (b) is obtained by using the Zipf's law ${\tt P}_{{\rm R}_i} = {i^{-\gamma}}/{\sum_{j=1}^K j^{-\gamma}}$, where $\gamma > 0$ is the Zipf parameter and using $a = 2^{\frac{\gamma}{n-1}}$. The final result follows by ignoring the interval corresponding to $b_1^{*} = 0$ as it happens only when the number of transmissions $n<1$, which is not possible.

\subsection{Proof of Equation (\ref{Eq:JointSuccProbP1})}
For a static user scenario under policy 1, the typical user connects to the closest SCBS located at distance $R_1$ during all transmissions. From definition, the joint coverage probability of file ${\cal F}_i$ in $k$ transmissions is given as
\begin{align*}
{{\tt P}^{(1)}_{i,k}} &= \E_{R_1}\bigg[\P\bigg(\bigcap_{j \in \{1 \dots k\}} \frac{h_{xj} r_1^{-\alpha}}{\sum\limits_{y \in \Phi \setminus \{x\} } h_{yj} \|y\|^{-\alpha}} > T \big|r_1 \bigg )\bigg]\\
& \stackrel{(a)}= \E_{R_1}\bigg[\prod_{j=1}^{k}\exp\bigg(-Tr_1^{\alpha} \sum\limits_{y \in \Phi \setminus \{x\}} h_{yj} \|y\|^{-\alpha}\bigg)\bigg]\\
&\stackrel{(b)}= \E_{R_1}\bigg[\prod_{y \in \Phi \setminus \{x\}} \bigg(\frac{1}{1+Tr_1^{\alpha}\|y\|^{-\alpha}}\bigg)^k \bigg]
\end{align*}
where (a) results from the fact that $h_{xj} \sim \exp(1)$ and the i.i.d. fading assumption across the $k$ attempts resulting in simply the product of each term, (b) follows from $h_{yj} \sim \exp(1)$. The final result follows from the PGFL of PPP $\Phi$, converting from Cartesian to polar coordinates and deconditioning w.r.t. $R_1$.

\bibliographystyle{IEEEtran}
\bibliography{ref,ref-HD}
\end{document}